\newtheorem{proposition}{Proposition}
\newtheorem{lemma}{Lemma}
\newtheorem{remark}{Remark}
\newcommand{\diff}{\mathrm{d}}
\newcommand{\op}[1]{\mathcal{#1}}
\newcommand{\p}{\pi} % weights
\newcommand{\vv}[1]{\mathbf{#1}} % bold vectors
\newcommand{\all}[1]{\p^{#1}, \theta^{#1}, \varphi^{#1}}
\definecolor{brownn}{rgb}{0.66,0.33,0.00}
\newcommand\del{\bgroup\markoverwith % colorier en rouge la barre de
                                % la commande sout du paquet ulem ;
                                % sout est utilisé par latexdiff
{\textcolor{red}{\rule[2pt]{2pt}{0.5pt}}}\ULon}
\title{A smoothed semiparametric likelihood for estimation of
  nonparametric finite mixture models with a copula-based dependence
  structure}
\author{Michael Levine\thanks{Department of Statistics, Purdue
        University, West Lafayette, IN, 47906}
   \and Gildas Mazo\thanks{Université Paris-Saclay, INRAE, MaIAGE, 78350, Jouy-en-Josas,
        France}}
\begin{document}

\maketitle

\begin{abstract}
  In this manuscript, we consider a finite multivariate nonparametric
  mixture model where the dependence between the marginal densities is
  modeled using the copula device. Pseudo EM stochastic algorithms
  were recently proposed to estimate all of the components of this
  model under a location-scale constraint on the marginals. Here, we
  introduce a deterministic algorithm that seeks to maximize a
  smoothed semiparametric likelihood. No location-scale assumption is
  made about the marginals. The algorithm is monotonic in one special
  case, and, in another, leads to ``approximate monotonicity''---whereby the
  difference between successive values of the objective function
  becomes non-negative up to an additive term that becomes negligible
  after a sufficiently large number of iterations. The
  behavior of this algorithm is illustrated on  several
  simulated datasets. The results suggest that, under suitable
   conditions, the proposed algorithm may indeed be
  monotonic in general. A discussion of the results and some possible
  future research directions round out our presentation.
\end{abstract}

{\bf Keywords: nonparametric finite density mixture, copula, pseudo-EM algorithm}

\section{Introduction}

Let 
\begin{align}
  \label{eq:generic-multivariate-mixture-model}
  g(\vv x) = g(x_{1},\dots,x_{d}) = \sum_{k=1}^{K} \p_{k} f_{k}(x_{1},\dots,x_{d})
\end{align}
be a multivariate mixture model with $K$ components (or clusters---we
shall use these two words interchangeably). We view the model \eqref{eq:generic-multivariate-mixture-model} as a
nonparametric mixture model where individual components $f_{k}$ are
not defined as belonging to any specific parametric family. The research on
selecting the number of components for non- and semiparametric density
mixtures is currently at a very early stage; some developments in this
area can be found in e.g. \cite{kasahara2014non} and
\cite{kwon2021estimation}. Due to this, we assume that the number of
components $K$ is fixed and known in our model.  In general, most of the work on nonparametric mixture modeling so far
assumed that the marginal distributions $f_{k1},\ldots,f_{kd}$ of each
component are conditionally independent. Such an assumption implies
that, conditional on knowing which component a particular observation
has been generated from, its distribution is equal to the product of
its marginals. More formally, this means that
\[
g(\vv x)=\sum_{k=1}^{K}\p_{k}\prod_{j=1}^{d}f_{kj}(x_j).
\]
The conditions sufficient to ensure identifiability for the conditionally independent model 
are known \cite{allman2009identifiability}. There are also a number of
approaches to estimating their parameters~\cite{xiang_overview_2019}, both iterative
\cite{benaglia2009like, levine2011maximum} and closed form solutions
\cite{bonhomme2016non}. However, the assumption of conditional
independence is not always a realistic one. For example, it is
unlikely to be true when dealing with RNA-seq data
\cite{rau2015co}. Thus, it seems desirable to relax this assumption
while retaining the generality of the nonparametric approach.

To the best of our knowledge, the only known results on estimation of
nonparametric mixture models with conditionally non-independent
components are
\cite{mazo2017semiparametric,mazo2019constraining}. They consider a
special case of the general nonparametric mixture model, allowing for
a non-trivial dependence structure where the marginals are assumed to
belong to a location-scale family. Stochastic algorithms were proposed
to estimate the copula parameter and the nonparametric marginals. The
estimation algorithms, while performing well in practice, do not
optimize any particular objective function. Because of this, their
convergence analysis will necessarily be a difficult one. In this
manuscript, our goal is to suggest a deterministic algorithm capable
of estimating the components of a nonparametric mixture model with
conditionally non-independent components without a location-scale
assumption for the marginals, since such an assumption is far from
commonly satisfied in applications.

In order to continue, we are going to fix the notation first. It is
well-known that, due to Sklar's theorem \cite{nelsen2007introduction}
p. $18$, every $d-$ dimensional multivariate cumulative distribution
function can be represented as a copula of the corresponding marginal
cumulative distribution functions. Indeed, let
$F_{k1}(x_1),\ldots,F_{kd}(x_d)$ be the marginal cumulative
distribution functions of the cumulative distribution function
$F_k(x_1,\ldots,x_d)$ that corresponds to the density
$f_{k}(x_{1},\dots,x_{d}).$ Then, there exists a $d-$ copula $C_k,$
which is a function $C_k:[0,1]^{d}\rightarrow [0,1],$ such that
\[
F_k(x_1,\ldots,x_d)=C_k(F_{k1}(x_1),\ldots,F_{kd}(x_d)),
\]
see~\cite{nelsen2007introduction} pp. $46.$ If the marginal cumulative
distribution functions are continuous, then the copula is unique.
The 
copula $C_k$ can be viewed as a $d$-dimensional cumulative distribution function
with uniform marginal distributions. Taking the derivative of order
$d$, one immediately obtains the representation
\[
f_{k}(x_{1},\dots,x_{d}) = c_k(F_{k1}(x_{1}),\dots,F_{kd}(x_{d}))
\prod_{j=1}^{d} f_{kj}(x_{j})
\]
where $c_k$ is the density of the copula
$C_k$. We assume that each copula density $c_k$ belongs to some
parametric family of copula densities indexed by a parameter
$\theta_{k}$. Denoting by 
$\varphi$ the set of all marginal densities $\{f_{kj}\}$,
and denoting by $\p=(\p_{1},\dots,\p_{K})^{'}$ and
$\theta=(\theta_{1},\dots,\theta_{K})^{'}$ the vectors of all
 weights and copula parameters, respectively, we have
\begin{align}
  \label{eq:copula-decomposition-for-components}
  f_{k}(\vv x;\theta,\varphi)
  = f_{k}(x_1, \dots, x_d;\theta,\varphi)
  = c(F_{k1}(x_{1}),\dots,F_{kd}(x_{d});\theta_{k})
  \prod_{j=1}^{d} f_{kj}(x_{j}),
\end{align}
so that~\eqref{eq:generic-multivariate-mixture-model}
and~\eqref{eq:copula-decomposition-for-components} define a class of
mixture densities that can be stated as $g(\cdot;\p,\theta,\varphi).$

The rest of this manuscript is structured as follows. Section $2$
introduces a general algorithm that can be used to estimate finite
mixtures of multivariate densities with a dependence structure defined
through the use of copulas. Section $3$ provides some results about
the monotonicity property of two simplified versions of this
algorithm. Section $4$ illustrates the performance of our algorithm
with a simulation study. Section $5$ discusses the results obtained
and suggests possible directions for future research.

\section{Algorithm}\label{sec:algorithm}

The goal of our manuscript is to estimate the components and weights
of the model
\eqref{eq:generic-multivariate-mixture-model}-\eqref{eq:copula-decomposition-for-components}. The
definition of such an algorithm starts with an objective function that
we are going to introduce next. First, let $K(\cdot)$ be a proper
univariate density function that can be used for kernel density
estimation and
$K_{h}(\cdot):=\frac{1}{h}K\left(\frac{\cdot}{h}\right)$ its rescaled
version where $h>0$ is a bandwidth. Next, for a generic function $f,$
we define
\begin{equation}
  \op{N}_hf(x):=\exp\left(\int K_{h}(x-u)\log f(u)\,\diff u\right)
\label{eq:smootheroperator}
\end{equation}
which is a nonlinear smoother of the function $f.$ Note that, even if $f$ is a density, $\op{N}f$ is not, in general a density due to Jensen's inequality. Now, we define the operator $\op{O}$ by 
$\op{O}f_{k}(\vv x;\theta,\varphi) =
c(F_{k1}(x_{1}),\dots,F_{kd}(x_{d});\theta_{k})
\prod_{j=1}^{d}\op{N}f_{kj}(x_{j})$.
This definition allows different bandwidths for different
dimensions and clusters, if needed.
Finally, let us denote
$\check{g}(\vv x; \p, \theta, \varphi) =
\sum_{k=1}^{K}\p_{k}\op{O}f_{k}(\vv x;\theta,\varphi)$.

The objective function we seek to maximize is the population version
of the smoothed semiparametric log-likelihood, given by
\begin{align}
  \ell(\p, \theta, \varphi) =
  \int g(\vv x) \log \frac{\check g(\vv x; \p, \theta, \varphi)}
  {g(\vv x)} \, \diff \vv x,\label{obj_func}
\end{align}
over all $(\p,\theta,\varphi)$; here $g(\mathbf x)$ is the target
density.  If the marginal distributions are conditionally independent
then $c(u_{1},\dots,u_{d};\theta_{k}) \equiv 1$ for every
$\theta_k$ and $k$, and hence~\eqref{obj_func} reduces to the smoothed
semiparametric log-likelihood considered in~\cite{levine2011maximum}.

\begin{lemma}\label{J1}
  For any choice of parameters $\widetilde \p,$ $\widetilde \theta,$
  $\widetilde \varphi,$ the smoothed loglikelihood difference is
  bounded as
  \begin{align*}
    \ell(\p, \theta, \varphi) 
    - \ell(\widetilde \p, \widetilde \theta, \widetilde \varphi)
    \le& \sum_{k=1}^K -\log \frac{\widetilde \p_{k}}{\p_{k}}
         \int g(\vv x) w_k(\vv x; \p, \theta, \varphi) \diff \vv x\\
       &-\int g(\vv x)\sum_{k=1}^{K}w_{k}(\vv x; \p, \theta, \varphi)\log
         \frac{\prod_{j=1}^{d} \op N \widetilde f_{kj}(x_{j})}
         {\prod_{j=1}^{d} \op N f_{kj}(x_{j})}\,\diff \vv x\\
       &-\int g(\vv x)\sum_{k=1}^{K}w_{k}(\vv x; \p, \theta, \varphi)\log \frac{c(\widetilde F_{k1}(x_{1}),
         \dots,
         \widetilde F_{kd}(x_{d}); \widetilde \theta_{k})}
         {c(F_{k1}(x_{1}),\ldots,F_{kd}(x_{d});\theta_{k})}\,\diff\vv x\\
       :=& \Psi_{1}(\widetilde \p | \all{}) + \Psi_{2}(\widetilde \varphi | \all{}) 
           + \Psi_{3}(\widetilde\theta, \widetilde \varphi | \all{}), 
  \end{align*}
  where the distribution functions $\widetilde F_{kj}$ are those
  associated with $\{\widetilde f_{kj}\} = \widetilde \varphi$ and
  \begin{equation}\label{w_def}
    w_{k}(\vv x; \p, \theta, \varphi) = \p_{k} \op O f_{k}(\vv x; \theta,
    \varphi) / \check g(\vv x; \p, \theta, \varphi),
  \end{equation}
  $k=1,\dots,K$.
  \begin{proof}[Proof of Lemma \ref{J1}.] 
    By definition, the difference of smoothed log-likelihoods can be written down as  
    \begin{align*}
      \ell(\p, \theta, \varphi)
      - \ell(\widetilde \p, \widetilde \theta, \widetilde \varphi)
      &= - \int g(\vv x) \log \frac
        {\sum_{k=1}^{K} \widetilde \p_{k} \op O
        f_{k}(\vv x; \widetilde \theta, \widetilde \varphi)}
        {\sum_{k=1}^{K} \p_{k} \op O f_{k}(\vv x; \theta, \varphi)}
        \, \diff \vv x\\
      &= - \int g(\vv x) \log \sum_{k=1}^{K} w_{k}(\vv x; \all{})
        \frac{\widetilde \p_{k} \op O 
        f_{k}(\vv x; \widetilde \theta, \widetilde \varphi)}
        {\p_{k} \op O f_{k}(\vv x; \theta, \varphi)} \, \diff \vv x
    \end{align*}
    At this point, it remains only to apply Jensen's inequality to a
    convex combination on the right-hand side whose coefficients are
    $w_{k}(\vv x; \theta, \varphi).$
  \end{proof}
\end{lemma}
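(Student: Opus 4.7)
The plan is to follow a classical EM-type majorization argument based on Jensen's inequality, which is the natural route whenever one has to bound a difference of log-likelihoods that involves the logarithm of a mixture. First I would observe that the $g(\vv x)\log g(\vv x)$ parts of $\ell(\p,\theta,\varphi)$ and $\ell(\widetilde \p,\widetilde \theta,\widetilde \varphi)$ cancel in the difference, so the problem reduces to bounding
\[
\ell(\p,\theta,\varphi) - \ell(\widetilde \p,\widetilde \theta,\widetilde \varphi)
= -\int g(\vv x) \log \frac{\check g(\vv x; \widetilde \p, \widetilde \theta, \widetilde \varphi)}{\check g(\vv x; \p, \theta, \varphi)} \, \diff \vv x.
\]

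The second step is to turn the ratio inside the logarithm into a convex combination whose coefficients sum to one. To do so, I would expand $\check g(\vv x; \widetilde \p, \widetilde \theta, \widetilde \varphi) = \sum_{k} \widetilde \p_k \op O f_k(\vv x; \widetilde \theta, \widetilde \varphi)$ and multiply and divide the $k$-th summand by $\p_k \op O f_k(\vv x; \theta,\varphi)$; after dividing through by $\check g(\vv x; \p,\theta,\varphi)$ the $w_k$ in~\eqref{w_def} appear naturally, yielding
\[
\frac{\check g(\vv x; \widetilde \p, \widetilde \theta, \widetilde \varphi)}{\check g(\vv x; \p, \theta, \varphi)}
= \sum_{k=1}^{K} w_k(\vv x; \p,\theta,\varphi) \, \frac{\widetilde \p_k \op O f_k(\vv x; \widetilde \theta, \widetilde \varphi)}{\p_k \op O f_k(\vv x; \theta, \varphi)}.
\]
Because $\check g$ is defined precisely as $\sum_k \p_k \op O f_k$, the weights $w_k$ automatically sum to one. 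Applying Jensen's inequality to the concave function $\log$ then gives $-\log \sum_k w_k y_k \le \sum_k w_k \log(1/y_k)$ inside the integral.

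The last step is purely algebraic: using the factorization $\op O f_k(\vv x; \theta,\varphi) = c(F_{k1}(x_{1}),\dots,F_{kd}(x_{d});\theta_{k}) \prod_{j} \op N f_{kj}(x_j)$, I would split the logarithm of the ratio into three pieces corresponding to the mixture weights $\widetilde\p_k/\p_k$, the products of smoothed marginals $\prod_j \op N \widetilde f_{kj}/\prod_j \op N f_{kj}$, and the copula densities $c(\widetilde F_{k\cdot};\widetilde\theta_k)/c(F_{k\cdot};\theta_k)$. Interchanging the finite sum over $k$ with the integral in $\vv x$ then produces the three contributions $\Psi_1, \Psi_2, \Psi_3$ as stated. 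There is no real obstacle here; this is a standard EM majorization, and the only point requiring care is verifying that $\sum_k w_k = 1$ so that Jensen applies, which is immediate from the definition of $\check g$.
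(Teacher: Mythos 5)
Your proposal is correct and follows essentially the same route as the paper's own proof: cancel the $g\log g$ terms, rewrite the ratio of the two $\check g$'s as a convex combination with coefficients $w_k$, apply Jensen's inequality, and split the resulting logarithm via the factorization of $\op O f_k$ into the three terms $\Psi_1,\Psi_2,\Psi_3$. If anything, you spell out the final algebraic decomposition more explicitly than the paper does.
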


Instead of minimizing
$\ell(\p, \theta, \varphi) - \ell(\widetilde \p, \widetilde \theta,
\widetilde \varphi)$ with respect to
$(\widetilde \p, \widetilde \theta, \widetilde \varphi)$ directly, we
seek to minimize the upper bound proposed by Lemma \ref{J1}. This
approach is in the spirit of MM (Minimization-Majorization)
algorithms; see e.g. \cite{wu2010mm} for the detailed discussion. To
do this, our heuristic is to minimize each of the three terms
$\Psi_{1}(\widetilde \p | \all{})$, $\Psi_{2}(\widetilde \varphi | \all{})$,
$\Psi_{3}(\widetilde\theta, \widetilde \varphi | \all{})$ separately. This is
sometimes called ``minimization by part''. 
%%% This is obsolete:
% Notice that the three quantities $\Psi_{1}(\widetilde \p)$,
% $\Psi_{2}(\widetilde \varphi)$,
% $\Psi_{3}(\widetilde\theta, \widetilde \varphi)$ depend on $\pi$,
% $\theta$ and $\varphi$ but we shall frequently forget this dependence
% in the notation. Likewise, we shall just write $w_{k}(\vv x)$ instead
% of $w_{k}(\vv x; \p, \theta, \varphi)$ and $f_{k}(\vv x)$ instead of
% $f_{k}(\vv x; \theta, \varphi)$, unless it causes confusion.
%%%
To minimize the first term $\Psi_{1}(\widetilde \p | \all{})$, we have to
choose $\widehat \p = \widehat\p$ where
$\widehat\p_{k} = \int g(\vv x) w_{k}(\vv x; \all{}) \, \diff \vv x,$
$k=1,\ldots,K.$ This is the result that can be obtained using standard
constrained optimization techniques. Note that the resulting minimum
must be non-positive since the first term can be made zero by choosing
$\widetilde\p=\p.$ To minimize the second term
$\Psi_{2}(\widetilde \varphi | \all{})$, define, as a first step,
$$\widehat f_{kj}(u_{j}) =\alpha_{kj} \int g(\vv x) w_{k}(\vv x; \all{})
K_{h_{kj}}\left(x_{j} - u_{j}\right) \, \diff \vv x,$$ for any
$k=1,\ldots,K$ and $j=1,\ldots,d$, where $\alpha_{kj}$ is the
normalizing constant ensuring that the newly defined $\widehat f_{kj}$
is, indeed, a proper density function.  Then, we have
\begin{align*}
  &- \int g(\vv x) w_{k}(\vv x; \all{}) \log
  \op N \widetilde f_{kj}(x_{j}) \, \diff \vv x \\
  &= - \int g(\vv x) w_{k}(\vv x; \all{}) 
    \left(
    \int K_{h_{kj}}\left(x_{j} - u_{j}\right)
    \log \widetilde f_{kj}(u_{j}) \, \diff u_{j} \right)
    \, \diff \vv x \\
  &= - \int \log \widetilde f_{kj}(u_{j}) \widehat f_{kj}(u_{j})
    \, \diff u_{j}.
\end{align*}
The same argument as in \cite{levine2011maximum} applies: the quantity above
is minimized if we select $\widetilde f_{kj}(u)=\widehat f_{kj}(u).$ The resulting minimum will also be less than or equal to zero because $\Psi_{2}(\widetilde \varphi | \all{})=0$ when $\widetilde \varphi = \varphi.$

Now, we can propose the following general algorithm for estimation of $(\p,\theta,\varphi).$

\renewcommand\labelenumi{A\theenumi}

\begin{enumerate}
\item\label{step1} Choose initial values $\p^{0},$ $\varphi^{0},$ $\theta^{0}$
\item\label{step2} Compute the initial set of weights
\[
w_{k}(\vv x; \p^{0}, \theta^{0}, \varphi^{0}) = \p_{k}^{0} \op O f_{k}(\vv x; \theta^{0},
\varphi^{0}) / \check g(\vv x; \pi^{0}, \theta^{0}, \varphi^{0}).
\]
\item\label{step3} At any step of iteration $t=1,2,\ldots$ select $$\p_{k}^{t} = \int g(\vv x) w_{k}(\vv x;\p^{t-1},\theta^{t-1},\varphi^{t-1}) \, \diff \vv x,$$ $k=1,\ldots,K.$ 
\item Select as the next value of the density function vector $\varphi^{t}=\{f_{kj}^{t}\}$ where 
$$f_{kj}^{t}(u_{j}) = \alpha_{kj}\int g(\vv x) w_{k}(\vv x;\p^{t-1},\theta^{t-1},\varphi^{t-1})
  K_{h_{kj}}\left(x_{j} - u_{j}\right)
   \, \diff \vv x$$
   where $\alpha_{kj}$ is the normalizing constant ensuring that the newly defined function is, indeed, a density function.  As a part of this step, also compute updated cumulative distribution functions $F_{kj}^{t}(u_{j})=\int_{-\infty}^{u_{j}}f_{kj}^{t}(y)\,\diff y.$ 
\item\label{step4} Choose the value 
\[
\theta^{t}=\arg\min_{\theta}\Psi_{3}(\theta, \varphi^{t} | \all{t-1}).
\]
\item\label{step5} Redefine weights
\[
w_{k}(\vv x; \p^{t}, \theta^{t}, \varphi^{t}) = \p_{k}^{t} \op O f_{k}(\vv x; \theta^{t},\varphi^{t}) / \check g(\vv x; \pi^t, \theta^{t}, \varphi^{t}).
\]
and return to step~A\ref{step3}. 
\end{enumerate}

At each step of the algorithm defined above, the marginals are updated
first and independently of the copula parameter. This strategy was
used in~\cite{mazo2017semiparametric,mazo2019constraining}.

\begin{remark}
  In practice, one implements  the empirical version of the algorithm. Every integral of the form
  $\int g(\vv x) \zeta(\vv x)\, \diff \vv x$, where $\zeta$ is some
  arbitrary function, is replaced by
  $\tfrac{1}{n} \sum_{i=1}^n \zeta(\vv X_i)$, where $\vv X_i =
  (X_{i1}, \dots, X_{id})$, $i = 1, \dots, n$, are observations from
  the target density $g$. The objective function to be maximized is
  then the empirical version of the smoothed log-likelihood, given by
  $\tfrac{1}{n} \sum_{i=1}^n \log \check g(\vv X_i; \p, \theta,
  \varphi)$ (up to an additive constant). Here the bandwidths of the nonlinear
  smoothers are allowed to depend on the data.
\end{remark}

\section{Studying the algorithm}

Whether the algorithm proposed in  Section~\ref{sec:algorithm} is monotonic  with respect to the objective functional \eqref{obj_func} is an open question. In some special cases, the answer is positive. One such case that we identified is when probabilities $\pi_{k}$ and the marginal densities $f_{kj}$ are known beforehand. In such a case, the simplified algorithm is as follows. 

\renewcommand\labelenumi{B\theenumi}

\begin{enumerate}
\item\label{step11} Choose initial value of the copula parameter $\theta^{0}$.
\item\label{step22} Compute the initial set of weights
\[
w_{k}(\vv x; \p, \theta^{0}, \varphi) = \p_{k} \op O f_{k}(\vv x; \theta^{0},
\varphi) / \check g(\vv x; \p, \theta^{0}, \varphi).
\]
\item\label{step41} For any $t=1,2,\ldots$ choose the value 
\[
\theta^{t}=\arg\min_{\theta}\Psi_{3}(\theta, \varphi | \p, \theta^{t-1}, \varphi).
\]
\item\label{step51} Redefine weights
\[
w_{k}(\vv x; \p, \theta^{t}, \varphi) = \p_{k} \op O f_{k}(\vv x; \theta^{t},\varphi) / \check g(\vv x; \p, \theta^t, \varphi).
\]
and return to step~B\ref{step41}.
\end{enumerate} 
\begin{proposition}
The algorithm defined in~B\ref{step11}--B\ref{step51} is monotonic
with respect to $\theta$,
that is, $\ell(\p, \theta^{t-1}, \varphi) - \ell(\p, \theta^t,
\varphi) \le 0$ for every $t=1, 2, \dots$
\end{proposition}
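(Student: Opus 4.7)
The plan is to apply Lemma \ref{J1} directly, with the ``current'' parameters taken to be $(\pi, \theta^{t-1}, \varphi)$ and the ``tilded'' parameters taken to be $(\widetilde \pi, \widetilde \theta, \widetilde \varphi) = (\pi, \theta^t, \varphi)$, and to show that each of the three majorizing terms is non-positive, with the last one being the only non-trivially vanishing one.

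First I would specialize the bound. Because the weights $\pi$ and the marginals $\varphi$ are held fixed in algorithm B, the substitution $\widetilde \pi_k = \pi_k$ collapses the logarithms in $\Psi_1(\widetilde \pi \mid \pi, \theta^{t-1}, \varphi)$ to $\log 1 = 0$, so $\Psi_1 = 0$. Likewise, $\widetilde f_{kj} = f_{kj}$ forces the ratio inside $\Psi_2(\widetilde \varphi \mid \pi, \theta^{t-1}, \varphi)$ to be identically one, hence $\Psi_2 = 0$. Thus Lemma \ref{J1} simplifies to
\begin{equation*}
  \ell(\pi, \theta^{t-1}, \varphi) - \ell(\pi, \theta^t, \varphi)
  \le \Psi_3(\theta^t, \varphi \mid \pi, \theta^{t-1}, \varphi).
\end{equation*}

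Next I would argue that the right-hand side is non-positive. By the very definition of step B\ref{step41}, $\theta^t$ is a minimizer of $\theta \mapsto \Psi_3(\theta, \varphi \mid \pi, \theta^{t-1}, \varphi)$, so in particular
\begin{equation*}
  \Psi_3(\theta^t, \varphi \mid \pi, \theta^{t-1}, \varphi)
  \le \Psi_3(\theta^{t-1}, \varphi \mid \pi, \theta^{t-1}, \varphi).
\end{equation*}
Plugging $\widetilde \theta = \theta^{t-1}$ and $\widetilde \varphi = \varphi$ into the definition of $\Psi_3$ makes the copula ratio inside the logarithm equal to one everywhere, so the upper bound on the right evaluates to zero. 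Combining the two displays yields $\ell(\pi, \theta^{t-1}, \varphi) - \ell(\pi, \theta^t, \varphi) \le 0$, which is the claimed monotonicity.

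I do not foresee any real obstacle: the argument is essentially the standard MM monotonicity principle, and the two potentially problematic terms $\Psi_1$ and $\Psi_2$ are killed for free by the fact that $\pi$ and $\varphi$ are never updated. The only modelling assumption implicitly used is that the $\arg\min$ in step B\ref{step41} exists so that $\theta^t$ is well-defined; if one wished to be careful, this could be replaced by any $\theta^t$ achieving $\Psi_3(\theta^t, \varphi \mid \pi, \theta^{t-1}, \varphi) \le 0$, which always exists since $\theta^{t-1}$ itself is admissible.
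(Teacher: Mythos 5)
Your argument is correct and is essentially the paper's own proof: both specialize the majorization of Lemma~\ref{J1} so that only $\Psi_3$ survives, then use the fact that $\widetilde\theta=\theta^{t-1}$ makes $\Psi_3$ vanish, so the minimizer $\theta^t$ achieves a non-positive value. Your explicit remark that $\Psi_1$ and $\Psi_2$ are identically zero when $\pi$ and $\varphi$ are held fixed is a slightly more careful rendering of the same step the paper takes implicitly.
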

\begin{proof}
The smoothed likelihood difference is bounded from above as  
\begin{align*}
 \ell(\p, \theta, \varphi)- \ell(\p, \widetilde \theta, \varphi)
 &\le \Psi_{3}(\widetilde\theta,\varphi | \all{}) \\
  &=-\int g(\vv x)\sum_{k=1}^{K}w_{k}(\vv x; \all{})\log \frac{c(F_{k1}(x_{1}),
    \dots,
    F_{kd}(x_{d}); \widetilde \theta_{k})}
    {c(F_{k1}(x_{1}),\ldots,F_{kd}(x_{d});\theta_k)}\,\diff\vv x.
\end{align*}
Choosing
$\theta^{*}=\arg\min_{\widetilde\theta}\Psi_{3}(\widetilde\theta,\varphi
| \all{})$ produces
$$\ell(\p, \theta, \varphi) - \ell(\p, \theta^{*},\varphi) \le
\Psi_{3}(\theta^{*},\varphi | \all{}) =
\min_{\widetilde\theta}\Psi_{3}(\widetilde\theta,\varphi | \all{});$$ since
there exists a value $\widetilde\theta=\theta$ such that
$\Psi_{3}(\theta,\varphi | \all{})\equiv 0$, the minimal value of
$\Psi_{3}(\widetilde\theta,\varphi | \all{})$ will be less than or equal to
zero.
\end{proof}

Another interesting special case results when one assumes that both component weights $\pi_{k}$ and  copula parameters $\theta_{k}$ are known while the marginal densities $f_{kj}$ are unknown. In this case, the simplified algorithm will be as follows. 

\renewcommand\labelenumi{C\theenumi}

\begin{enumerate}
\item\label{step31} Choose initial values $\varphi^{0}$
\item\label{step32} Compute the initial set of weights
\[
w_{k}(\vv x; \p, \theta, \varphi^{0}) = \p_{k}\op O f_{k}(\vv x; \theta,
\varphi^{0}) / \check g(\vv x; \p, \theta, \varphi^{0}).
\]
\item\label{step33} For $t=1,2,\ldots$ select as the next value of the
  density function vector $\varphi^{t}=\{f_{kj}^{t}\}$ where
  $f_{kj}^{t}(u_{j}) = \alpha_{kj}\int g(\vv x) w_{k}(\vv x; \p,
  \theta, \varphi^{t-1})
  K_{h_{kj}}\left(x_{j} - u_{j}\right)
   \, \diff \vv x$. Here, $\alpha_{kj}$ is a normalizing constant, ensuring that the newly defined function is, indeed, a density function.  As a part of this step, also compute updated cumulative distribution functions $F_{kj}^{t}(u_{j})=\int_{-\infty}^{u_{j}}f_{kj}^{t}(y)\,\diff y.$ 
%\item\label{step34} Choose the value 
%\[
%\theta^{t}=\arg\min_{\theta}\Psi_{3}(\theta, \varphi^{t})
%\]
%where $\Psi_{3}(\theta, \varphi^{t})$ is the third term of the upper bound from Lemma \eqref{J1}.
\item\label{step35} Redefine weights
\[
w_{k}(\vv x; \p, \theta, \varphi^{t}) = \p_{k}\op O f_{k}(\vv x; \theta,\varphi^{t}) / \check g(\vv x; \p, \theta, \varphi^{t}).
\]
and return to step C\ref{step33}. 
\end{enumerate}
The special case of the general algorithm defined above possesses an ``approximate monotonicity'' property in the following sense. 

%%% BIG PROPOSITION
\begin{proposition}\label{func_version}
We assume that the target density $g(\vv x)$ has a compact support
$\Omega$. We also assume that none of the known weights $\pi_{k}$ is
equal to zero.
Suppose that the kernel function $K(\cdot)$
is a proper density function defined on $[-1,1]$, 
bounded away from zero by $K_{*}>0$, and 
Lipschitz continuous with a positive Lipschitz constant $L$.
We assume that the copula density function
$c(u_1,\ldots,u_d;\theta)$ is also Lipschitz continuous on $[0,1]^{d}$
and bounded away from zero.  Then, there exists a subsequence $\varphi^{t_l}
= (f_{kj}^{t_{l}},\, k = 1, \dots, K,\, j = 1, \dots, d)$, $l = 1, 2, \dots$, such that the the algorithm C\ref{step31}--C\ref{step35} is ``approximately monotonically ascending'' along this subsequence: 
\[
  \ell(\p, \theta, \varphi^{t_{l-1}}) - \ell(\p, \theta, \varphi^{t_{l}}) \le o(1)
\]
as $l\rightarrow \infty$.
%% GM: I remove the following because it adds more confusion than
%% clarification and is not needed anyway.
% \alert{Moreover, the limit of every $f_k^{t_l}$ can be represented
% (for every cluster k) as
% $f_{k}^{*}(\vv u)=\int K_{h}(\vv u-\vv x)\phi_{k}^{*}(\vv x)\,
% \diff\vv x$ with $\phi_{k}^{*}(\vv u)$ being a density function such
% that its logarithm
% $\log \phi_{k}^{*}(\vv u)\in L_{1}(\Omega)$.}{Unclear...}
\end{proposition}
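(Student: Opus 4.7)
The plan is to apply Lemma \ref{J1} with $\widetilde{\p} = \p$ and $\widetilde{\theta} = \theta$, so that the $\Psi_1$ term vanishes and
\begin{align*}
\ell(\p, \theta, \varphi^{t-1}) - \ell(\p, \theta, \varphi^{t}) \le \Psi_2(\varphi^{t} \mid \all{t-1}) + \Psi_3(\theta, \varphi^{t} \mid \all{t-1}).
\end{align*}
Unlike the situation of the preceding proposition, the C-update $\varphi^{t}$ only minimizes $\Psi_2$ and leaves $\Psi_3$ uncontrolled, because $\Psi_3$ depends on $\widetilde{\varphi}$ through the plug-in CDFs $\widetilde{F}_{kj}$ inside the copula density. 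The strategy is therefore to show that \emph{along a well-chosen subsequence} the marginal densities change so little from one iterate to the next that both $\Psi_2$ and $\Psi_3$ are $o(1)$.

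The second step is a compactness argument that delivers such a subsequence. The update
\begin{align*}
f_{kj}^{t}(u) = \alpha_{kj} \int g(\vv x) w_{k}(\vv x; \all{t-1}) K_{h_{kj}}(x_{j} - u) \, \diff \vv x
\end{align*}
is a smoothing against the rescaled Lipschitz kernel $K_{h_{kj}}$. The boundedness of $K$ on $[-1,1]$, its Lipschitz constant $L$, the compact support $\Omega$ of $g$, and the fact that the weights lie in $[0,1]$ together yield a uniform $L^{\infty}$ bound on $\{f_{kj}^{t}\}$ and a uniform Lipschitz constant of order $\alpha_{kj} L / h_{kj}^{2}$. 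Arzel\`a--Ascoli applied on the compact projection of $\Omega$ to the $j$-th coordinate then produces a subsequence $\{t_l\}$ along which $f_{kj}^{t_l} \to f_{kj}^{\ast}$ uniformly for every $k,j$; since consecutive elements of the subsequence share the same limit, $\norm{f_{kj}^{t_l} - f_{kj}^{t_{l-1}}}_{\infty} \to 0$ and $\norm{F_{kj}^{t_l} - F_{kj}^{t_{l-1}}}_{\infty} \to 0$ as $l \to \infty$.

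The third step converts this uniform control into the desired bounds. For $\Psi_3$, the assumptions that $c$ is Lipschitz on $[0,1]^{d}$ and bounded away from zero make $\log c$ Lipschitz in its CDF arguments, so the integrand of $\Psi_3$ is $o(1)$ uniformly in $\vv x \in \Omega$ and hence $\Psi_3(\theta, \varphi^{t_l} \mid \all{t_{l-1}}) \to 0$. For $\Psi_2$, write $\log \op{N} f(x) = \int K_{h_{kj}}(x - u) \log f(u) \, \diff u$; provided $\{f_{kj}^{t_l}\}$ admits a uniform positive lower bound, $\log f_{kj}^{t_l} \to \log f_{kj}^{\ast}$ uniformly, and the integral representation makes $\op{N} f_{kj}^{t_l} \to \op{N} f_{kj}^{\ast}$ uniformly, so $\Psi_2 \to 0$. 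The required lower bound is supplied by the hypothesis $K \ge K_{\ast} > 0$ on $[-1,1]$: directly from the definition,
\begin{align*}
f_{kj}^{t}(u) \ge \alpha_{kj} K_{\ast} h_{kj}^{-1} \int_{|x_{j} - u| \le h_{kj}} g(\vv x) w_{k}(\vv x; \all{t-1}) \, \diff \vv x,
\end{align*}
and this is positive because $\p_{k} > 0$ forces $w_{k} > 0$ wherever $g > 0$. Plugging into Lemma \ref{J1} then yields $\ell(\p, \theta, \varphi^{t_{l-1}}) - \ell(\p, \theta, \varphi^{t_l}) \le o(1)$.

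The hardest step is securing that the uniform positive lower bound on $f_{kj}^{t}$ is genuinely \emph{uniform} in $t$ and in $u$ over the relevant region, since this is what makes the nonlinear smoother $\op{N}$ continuous on the orbit of iterates; all of the stated hypotheses (compact support of $g$, $\p_{k} > 0$, $K \ge K_{\ast} > 0$, Lipschitz copula bounded away from zero) are used in tandem here. A secondary technicality is to check that the normalizing constants $\alpha_{kj}$ are themselves uniformly bounded and uniformly bounded below across iterations, which follows from $K$ being a proper density and the compactness of $\Omega$, but is needed so that the Lipschitz and lower bounds on $f_{kj}^{t}$ are honestly independent of $t$.
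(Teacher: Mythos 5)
Your proposal is correct and follows essentially the same route as the paper: the same Lemma~\ref{J1} decomposition with $\widetilde\p=\p$, $\widetilde\theta=\theta$, the same Arzel\`a--Ascoli compactness argument producing a uniformly convergent subsequence of the iterates (the paper delegates this to its Lemma~\ref{lem:univergseq}, citing Lemma~A2 of \cite{levine2011maximum}), and the same use of the Lipschitz/boundedness hypotheses on $K$ and $c$ to kill the remaining terms. The only cosmetic differences are that you show $\Psi_2\to 0$ along the subsequence where the paper simply invokes $\Psi_2\le 0$ from the minimization step, and that you bound $\Psi_3$ directly via a uniform Lipschitz bound on $\log c$ composed with the uniformly convergent CDFs, which is if anything tidier than the paper's splitting of the weights $w_k$ into a limit term and a difference term.
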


\begin{remark}
  It follows directly from the definition that
  $K_{*}\le K(\cdot)\le K^{*}$ where both $K_{*}$ and $K^{*}$ are
  positive. The assumptions of Lipschitz continuity and boundedness
  away from zero for the kernel function $K(\cdot)$ do not represent a practical problem since they are
  not concerned with the actual data---rather, $K(\cdot)$ is a tool
  used to analyze the data. Our simulation results suggest that they also may not be necessary.
\end{remark}

\begin{remark}
  The assumption of compact support for the target density $g(\vv x)$
  and, by extension, for all of the marginal densities
  $f_{kj}$ does not represent a problem from the
    practical viewpoint. From the theoretical viewpoint,
  a result analogous to Proposition~\ref{func_version}
  can be proved if one assumes that all of the marginal densities
  decay to zero sufficiently fast at infinity and using the
  Fr\'{e}chet-Kolmogorov theorem instead of the Arzel\`{a}-Ascoli
  theorem \cite{brezis2011functional} p. $126$.
\end{remark}

\begin{remark}
As an example of copulas satisfying conditions of Proposition~\ref{func_version} we can point out Farlie-Gumbel-Morgenstern (FGM) copulas as well as so-called copulas with cubic sections (that are direct generalizations of FGM copulas) \cite{nelsen2007introduction} pp. $77-84$.
\end{remark}
%%% PROOF
\begin{proof}
  The difference in log-likelihoods can be bounded as 
  \begin{align*}
 & \ell(\p, \theta, \varphi^{t_{l-1}}) - \ell(\p, \theta, \varphi^{t_l}) 
   \le \Psi_{2}(\varphi^{t_l} | \p, \theta, \varphi^{t_{l-1}})
          +\Psi_{3}(\theta, \varphi^{t_l} | \p, \theta, \varphi^{t_{l-1}}) \\
    &=-\int g(\vv x)\sum_{k=1}^{K}w_{k}(\vv x; \p, \theta, \varphi^{t_{l-1}})\log
        \frac{\prod_{j=1}^{d} \op N f_{kj}^{t_l}(x_{j})}
        {\prod_{j=1}^{d} \op N f_{kj}^{t_{l-1}}(x_{j})}\,\diff\vv x \\
    & -\int g(\vv x)\sum_{k=1}^{K}w_{k}(\vv x; \p, \theta, \varphi^{t_{l-1}})\log
        \frac{c(F_{k1}^{t_l}(x_{1}),\dots,F_{kd}^{t_l}(x_{d}); \theta_{k})} 
        {c(F_{k1}^{t_{l-1}}(x_{1}),\ldots,F_{kd}^{t_{l-1}}(x_{d});\theta_k)}\,\diff\vv x. 
  \end{align*}
  Recall that minimization of
  $\Psi_{2}(\varphi^{t_l} | \p, \theta, \varphi^{t_{l-1}})$ always results in
  $\Psi_{2}(\varphi^{t_l} | \p, \theta, \varphi^{t_{l-1}})\le 0$ since the
  choice $f_{kj}^{t_l}=f_{kj}^{t_{l-1}}$ for all $k=1,\ldots,K$ and
  $j=1,\ldots,d$ makes this term equal to zero. Therefore, it remains to
  show that
  $\Psi_{3}(\theta, \varphi^{t_l} | \p, \theta, \varphi^{t_{l-1}}) \to
  0$ as $l \to \infty$. To do this, let us introduce a lemma.

  \begin{lemma}
    \label{lem:univergseq}
    For each $k = 1, \dots, K$ and $j = 1, \dots, d$, the sequence
    $f_{kj}^t$, $t=1, 2, \dots$ has a uniformly converging subsequence
    $f_{kj}^{t_l}$, $l = 1, 2, \dots$.
  \end{lemma}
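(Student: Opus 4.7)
The plan is to apply the Arzel\`a--Ascoli theorem to each sequence $\{f_{kj}^t\}_{t\ge 1}$, which requires three ingredients: a common compact domain, uniform boundedness, and equicontinuity. Once a convergent subsequence is extracted for one pair $(k,j)$, iterating the extraction $Kd$ times (a finite diagonal procedure) yields a single index sequence $t_l$ that works simultaneously for all $k,j$.

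For the common compact domain, note that $f_{kj}^t(u_j)$ is a convolution-type integral with respect to $K_{h_{kj}}(x_j - u_j)$, which vanishes when $|x_j - u_j| > h_{kj}$. Since $g$ is supported on the compact set $\Omega$, every $f_{kj}^t$ is supported in a fixed compact interval $I_{kj}$ obtained by thickening the projection of $\Omega$ onto the $j$-th axis by $h_{kj}$. The key observation that makes uniform boundedness and Lipschitz continuity effortless is a cancellation inside the normalizing constant: since $\int K_{h_{kj}}(x_j - u_j)\,\diff u_j = 1$, Fubini gives
\[
\alpha_{kj}^{-1} = \int g(\vv x) w_k(\vv x; \p, \theta, \varphi^{t-1}) \, \diff \vv x.
\]
Using $K_{h_{kj}} \le K^*/h_{kj}$ and factoring $\alpha_{kj}^{-1}$ back out yields the uniform bound
\[
f_{kj}^t(u_j) \le \frac{K^*}{h_{kj}},
\]
independent of $t$. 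The same cancellation combined with the Lipschitz hypothesis on $K$ gives
\[
|f_{kj}^t(u_j) - f_{kj}^t(u_j')| \le \frac{L}{h_{kj}^{2}}\,|u_j - u_j'|,
\]
so the family is in fact uniformly Lipschitz, hence equicontinuous.

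The first step, extracting a convergent subsequence for a fixed $(k,j)$, then follows from Arzel\`a--Ascoli on $I_{kj}$. I would then proceed inductively: having obtained a subsequence along which $f_{11}^t$ converges uniformly, pass to a further subsequence on which $f_{12}^t$ converges uniformly, and so on through all pairs $(k,j)$. Because there are only finitely many such pairs, this finite iteration, rather than a true diagonal argument, produces the single subsequence $t_l$ for which every $f_{kj}^{t_l}$ converges uniformly.

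I do not expect a serious obstacle here; the nontrivial observation is really the identity $\alpha_{kj}^{-1} = \int g\, w_k\, \diff\vv x$, which lets the normalizing constant absorb the mass of the weighted integral and prevents any $t$-dependence from leaking into the bounds. The Lipschitz and boundedness hypotheses on the kernel $K$ enter exactly at this point; boundedness away from zero of $K$ and of the copula density is not needed for this lemma (it will be used later to control $\Psi_3$), so I would not invoke those hypotheses here.
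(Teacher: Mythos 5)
Your proposal is correct and follows essentially the same route as the paper, which does not spell out a proof but defers to Lemma A2 of Levine, Hunter and Chauveau (2011) --- itself an Arzel\`a--Ascoli argument resting on exactly the ingredients you identify (common compact support, uniform boundedness, and equicontinuity of the iterates, with the normalization $\alpha_{kj}^{-1}=\int g\,w_k\,\diff\vv x$ doing the work). The only point worth a second look is the equicontinuity step: under the paper's literal hypotheses a kernel supported on $[-1,1]$ and bounded below by $K_*>0$ cannot be Lipschitz across the endpoints of its support, so the clean bound $L|u_j-u_j'|/h_{kj}^2$ needs a small additional argument (or the tacit convention that $K$ is Lipschitz as a function on all of $\mathbb{R}$) near the boundary --- an issue inherited from the paper's assumptions rather than introduced by you.
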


  The proof of Lemma~\ref{lem:univergseq} is similar
  to the proof of Lemma A2 in ~\cite{levine2011maximum} and is not given. Denote by $f_{kj}^*$
  the limit of $f_{kj}^{t_l}$ as $l \to \infty$. Denote by $\varphi^*$
  the collection of all such limits. Since $\Omega$ is
  compact, it follows in a straightforward manner from Lemma~\ref{lem:univergseq}
  that each subsequence $F_{kj}^{t_{l}}(u)$ converges uniformly to
  $F_{kj}^{*}(u) := \int_{-\infty}^u f_{kj}^*(x)\, \diff x$.
  To show that $\Psi_{3}(\theta, \varphi^{t_l} | \p, \theta,
  \varphi^{t_{l-1}})$ goes to zero as $l$ goes to infinity, we proceed
  as follows. We have
  %% For ease of typing, define the following commands:
  \newcommand\cop[1]{c(F_{k1}^{#1}(x_1), \dots,
    F_{kd}^{#1}(x_d) ; \theta_k)}
  \newcommand\wei[1]{w_k(\vv x; \p, \theta, \varphi^{#1})}
  %% Equation:
  \begin{align*}
    &| \Psi_3(\theta, \varphi^{t_l} | \p, \theta, \varphi^{t_{l-1}}) | \\
   & \le \sum_{k=1}^K \left |\int g(\vv x) \wei{t_{l-1}}
        \log \frac{\cop{t_l}}{\cop{t_{l-1}}}\, \diff \vv x \right |.
  \end{align*} 
  Each summand is bounded as
  \begin{align}
    \label{eq:bound-with-two-terms}
    &\left| \int g(\vv x) \wei{*} \log \frac{ \cop{t_l} }{
       \cop{t_{l-1}} } \, \diff \vv x \right| + \nonumber\\
    &\left| \int g(\vv x) (\wei{t_{l-1}} - \wei{t_l})
       \log \frac{ \cop{t_l} }{ \cop{t_{l-1}} } \, \diff \vv x \right|.
  \end{align}
  Since the copula density is bounded from above and below, the second
  term is less than or equal to a constant
  times
  % \begin{equation*}
     $\int g(\vv x) |\wei{t_{l-1}} - \wei{t_l}|\, \diff \vv x$.
  % \end{equation*}
  But, by the dominated convergence theorem, this integral vanishes
  because the kernel $K$ and the copula density are bounded from above
  and below, the copula density is Lipschitz continuous and,
  from~\cite{levine2011maximum}, $\op N f_{kj}^{t_l}$ converges
  uniformly to $\op N f_{kj}^*$ as $l \to \infty$. 

  The first term in~(\ref{eq:bound-with-two-terms}) is bounded by
  \begin{align*}
   & \left| \int g(\vv x) \wei{*} \log \frac{ \cop{t_l} }{ \cop{*} }
      \, \diff \vv x \right| \\
    &+ \left| \int g(\vv x) \wei{*} \log \frac{ \cop{*} }{ \cop{t_{l-1}} }
      \, \diff \vv x \right|.
  \end{align*}
  But again this bound goes to zero by similar arguments.  This
  finishes the proof.
\end{proof}

\section{Numerical Study}\label{Num}

Five hundred replications of four independent artificial datasets of
sizes $n=300, 500, 700, 900$ were generated from the mixture
model~(\ref{eq:generic-multivariate-mixture-model})--(\ref{eq:copula-decomposition-for-components})
with 3 clusters of equal proportions, FGM copulas with parameters
$-0.5, 0.5, 0$ and marginals as in Table~\ref{tab:marginals}, where
$\mathrm{N}(\mu,\sigma^2)$ and $\mathrm{L}(\mu,\sigma^2)$ refer to the
normal and Laplace distributions with mean $\mu$ and standard
deviation $\sigma$, respectively. (The density of a $\mathrm{L}(\mu,\sigma^2)$
 distribution is then given by
 $f(x)=e^{-\sqrt 2 |x - \mu| / \sigma} / (\sqrt 2 \sigma)$ for any real $x$.)
The
algorithm of Section~\ref{sec:algorithm} was implemented to estimate
the cluster proportions, the copula parameters and the marginal
densities. The kernel $K$ was the Gaussian kernel. 
The number of iterations was arbitrarily fixed to fifty. A
bottleneck of the algorithm is the numerical evaluation of the
integral~\eqref{eq:smootheroperator}. It was found empirically that,
instead of~\eqref{eq:smootheroperator}, evaluating the integral
$\int_{-1.96h}^{1.96h} K_h(u) \log \max\{ f(x - u), 10^{-5} \}\, \diff
u$ gave more stable results more rapidly. The \texttt{integrate}
function of \texttt{R} with the default parameters was used.

 For initialization, a $k$-means algorithm was performed. The
marginal densities were initialized by standard kernel density
estimation using the split returned by the $k$-means algorithm.
For each cluster and dimension, a bandwidth was selected and standard kernel density estimation
performed using only the data assigned to the given
cluster. The bandwidths were kept fixed throughout the algorithm.
The copula parameters were initialized to zero. The cluster
proportions were initialized to the cluster proportions found by the
$k$-means algorithm. 

\begin{table}[htp]
  \centering
  \begin{tabular}{l|ccc}
    &cluster 1           &cluster 2           &cluster 3\\ \hline
    dim 1&$\mathrm{N}(-3,2^2)$&$\mathrm{N}(0,0.7^2)$&$\mathrm{N}(3,1.4^2)$\\
    dim
    2&$\mathrm{L}(0,0.7^2)$&$\mathrm{L}(3,1.4^2)$&$\mathrm{L}(0,2.8^2)$\\
  \end{tabular}
  \caption{Marginals used for the numerical experiment.}
  \label{tab:marginals}
\end{table}

%% RESULTS FOR THE OBJECTIVE FUNCTION VALUES
Figure~\ref{fig:smoothed-log-likelihood} shows the values of the
empirical smoothed log-likelihood~(\ref{obj_func}) at each step of the
algorithm for the first ten replications in the case $n=300$ and
$n=900$. All of the trajectories look monotonic.  It was numerically
calculated that, out of the $N=500$ trajectories, only 17 were
non-monotonic for $n=300$ at the $10^{-5}$ precision. This number goes
down to 1 for $n=500$, and zero for $n=700$ and $n=900$. This suggests
that the algorithm of Section~\ref{sec:algorithm} may indeed be
monotonic for the copula and marginal families chosen above.

\begin{figure}[h]
  \centering
  \includegraphics[width=0.55\textwidth]{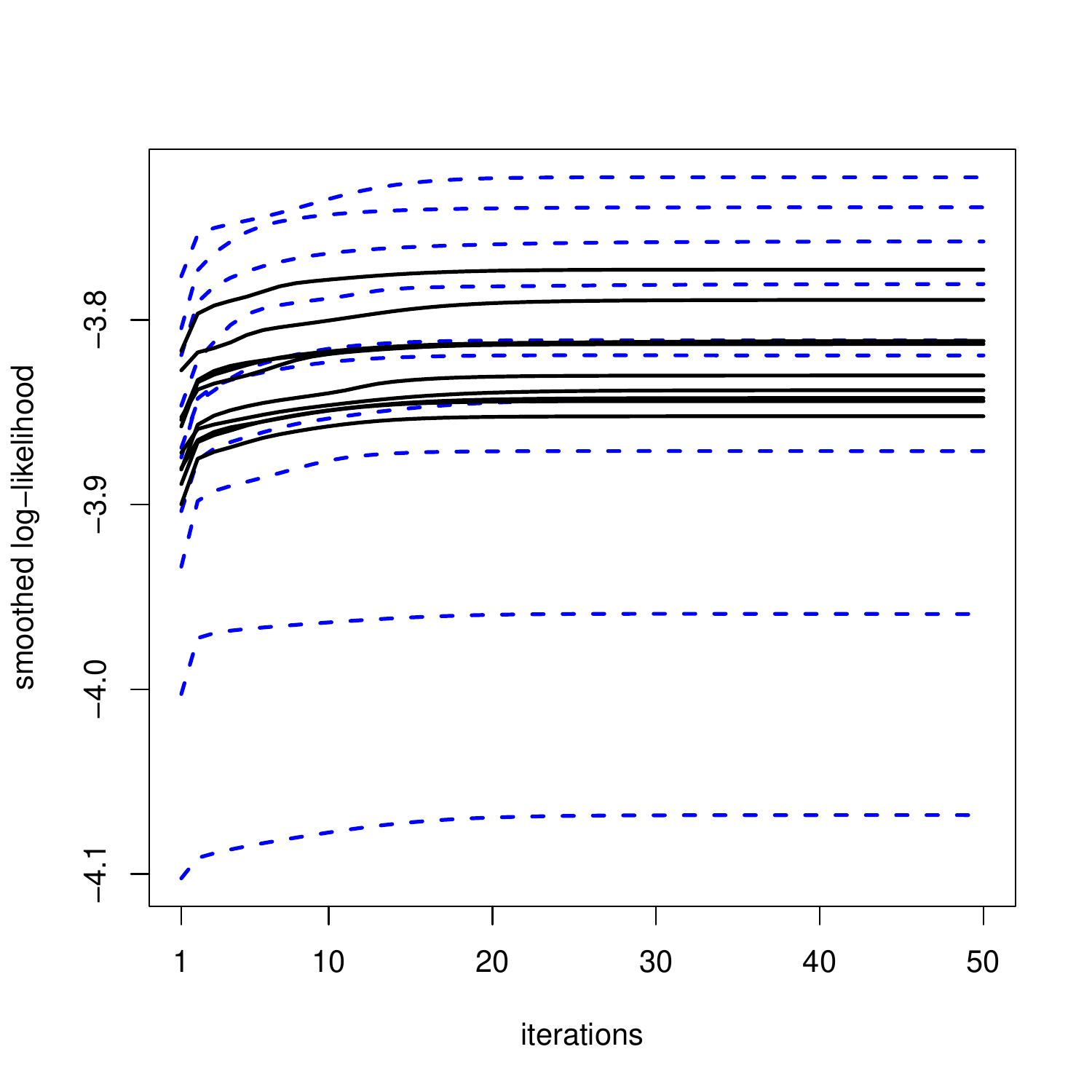}
  \caption{Values of the empirical smoothed log-likelihood at each step of the
    algorithm, for the first ten replications. Black plain lines:
    $n=900$. Blue dotted lines: $n=300$.}
  \label{fig:smoothed-log-likelihood}
\end{figure}

%% RESULTS FOR THE COPULA PARAMETERS
Figure~\ref{fig:coppar} shows the sum of the estimated squared biases
and variances for the copula parameter vector. The variance is
$3$ times higher than the squared bias for $n=300$, and only
$1.6$ times higher for $n=900$. While the bias remains stable, the
variance decreases with $n$, but at a slower rate than the
``parametric'' rate $1/n$. While $n=900$ is $3$ times larger than
$n=300$, the variance at $n=900$ is only $1.5$ times smaller than the
variance at $n=300$. The mean absolute bias is about
$\sqrt{0.5/3}\approx 0.4$, while the mean standard errors at $n=300$
and $n=900$ are about $\sqrt{1.2/3}\approx 0.63$ and
$\sqrt{0.8/3}\approx 0.52$, respectively.

\begin{figure}[h]
  \centering
  \includegraphics[width=0.55\textwidth]{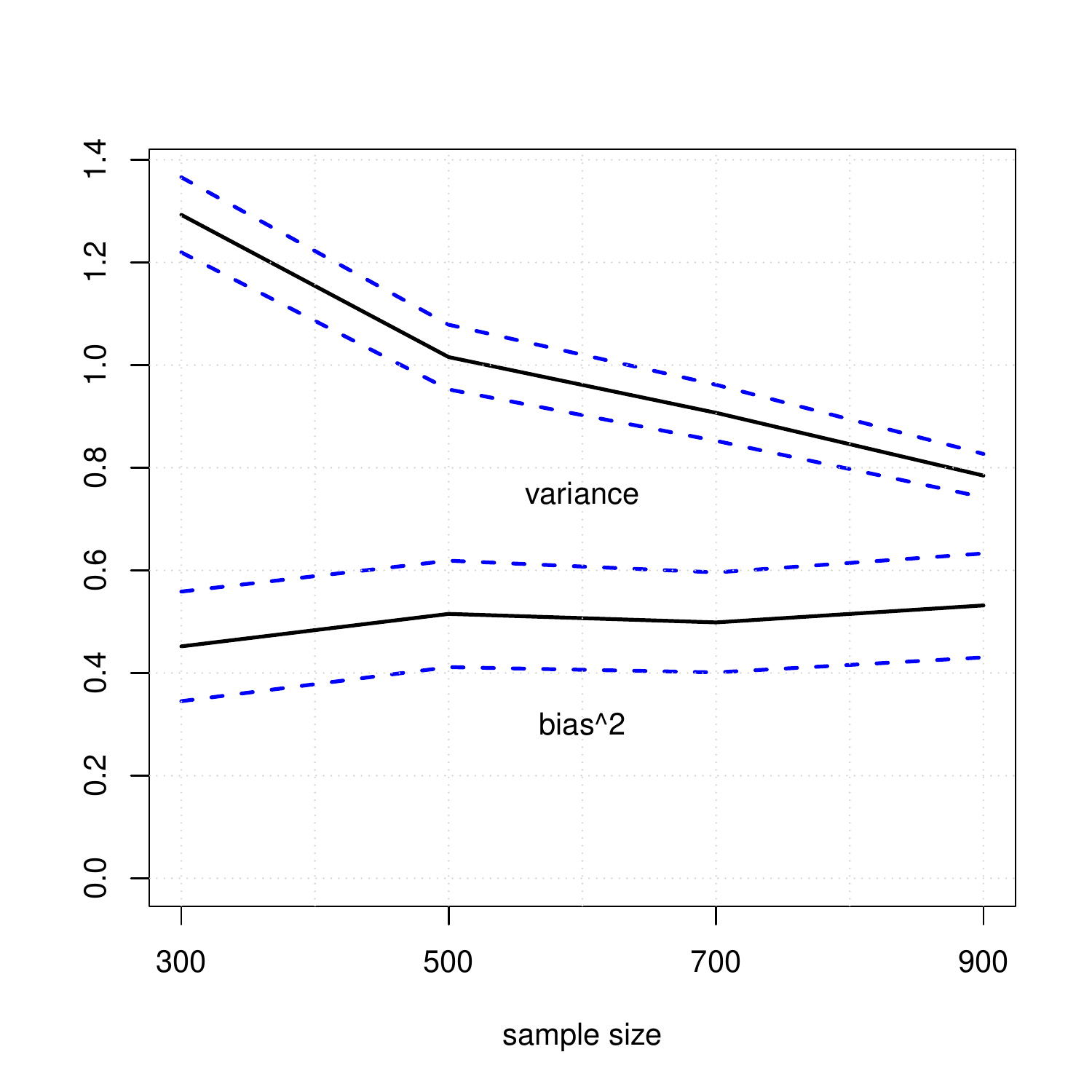}
  \caption{Estimated squared bias and variance of the copula parameter
    vector estimator for various sample sizes at the
    last step of the algorithm. Dashed blue lines represent 95\%
    confidence bands (aka simultaneous confidence intervals) obtained
    from an application of the multivariate central limit theorem to
    the five hundred replications.}
  \label{fig:coppar}
\end{figure}

%%%%% RESULTS FOR THE MARGINALS (LAST STEP OF ALGORITHM)
Figure~\ref{fig:marginals-lstep-n900} shows the marginal
density estimates at the last step of the algorithm for $n=900$, for the
last replication. The estimates agree well with the true marginal
densities. We noticed, however,  that
they were similar to the initial estimates.
% (see Figure~\ref{fig:marginals-fstep-n900}).  

\begin{figure}[h]
  \centering
  \subfloat[]{
    \label{fig:subfig:truemargfirstdim}
    \includegraphics[width=.5\textwidth]{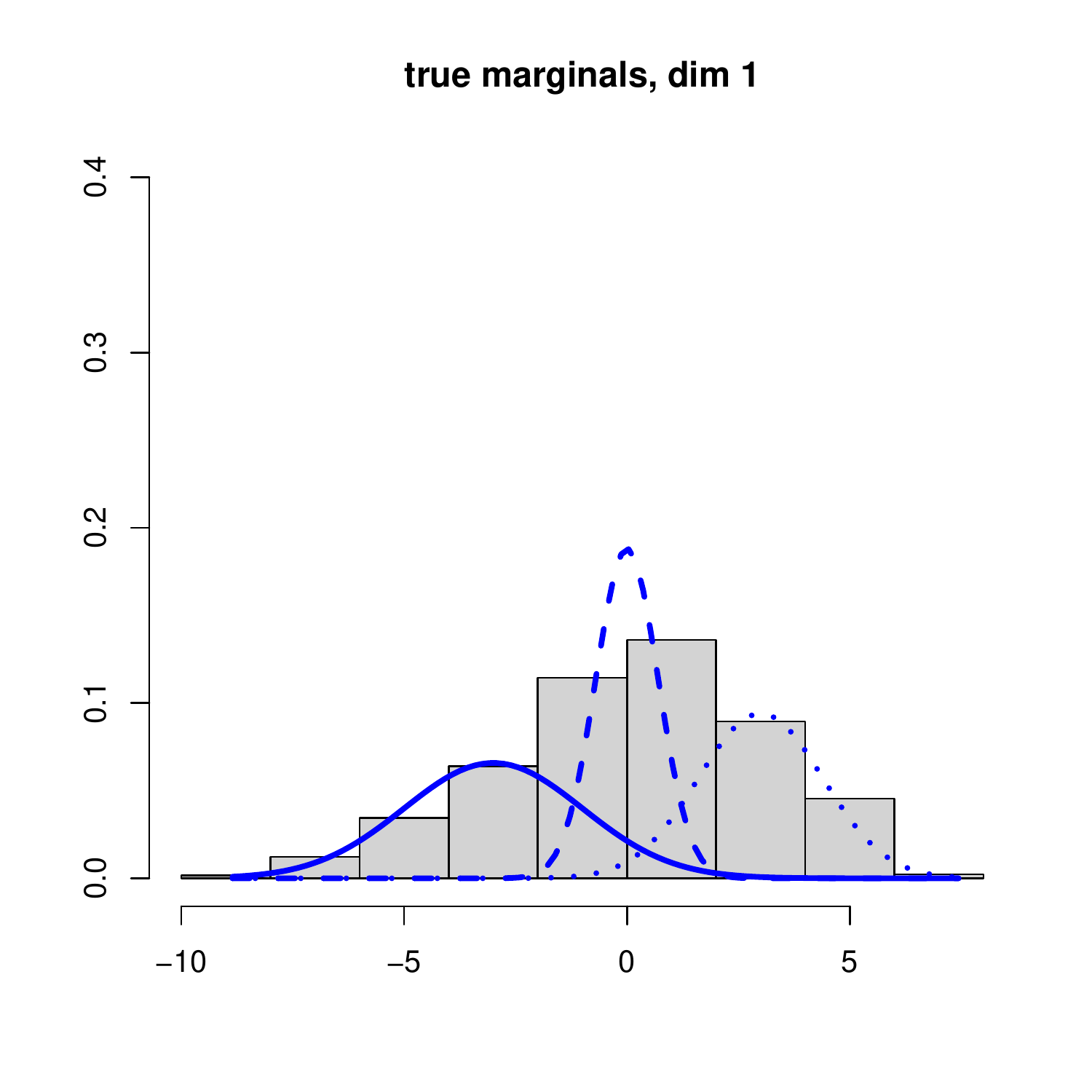} }
  \subfloat[]{
    \label{fig:subfig:truemargseconddim}
    \includegraphics[width=.5\textwidth]{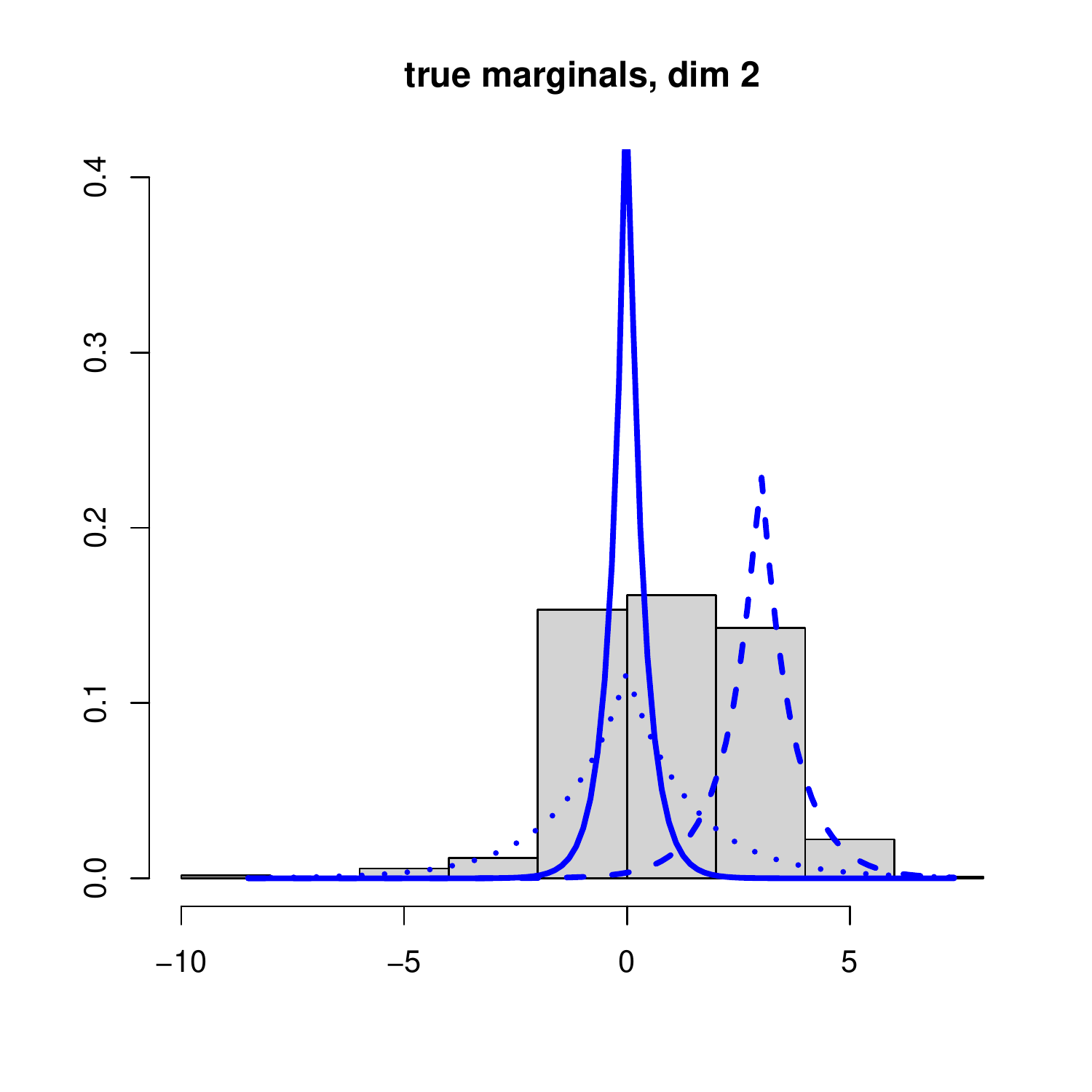} }\\
  \subfloat[]{
    \label{fig:subfig:estimmargfirstdim}
    \includegraphics[width=.5\textwidth]{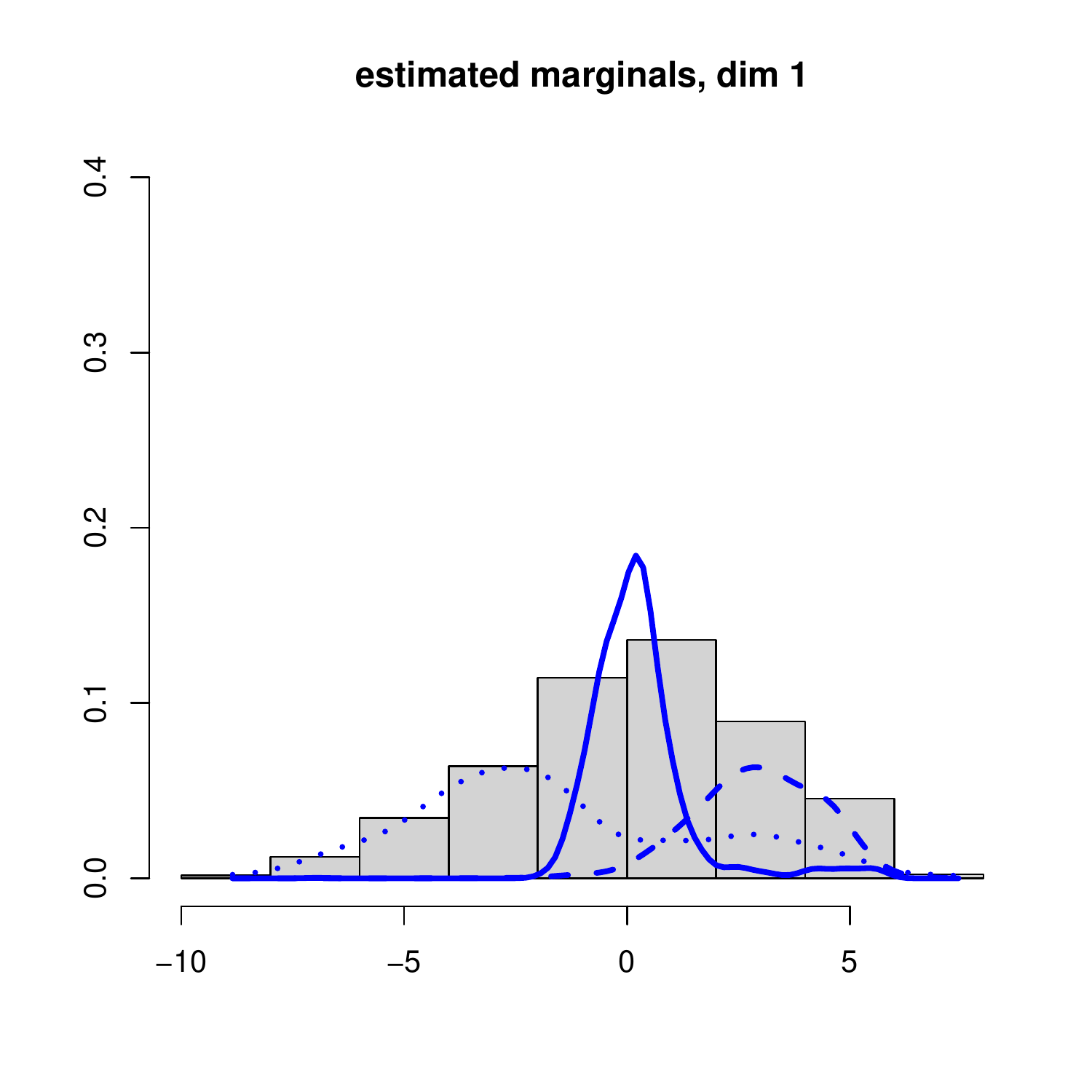} }
  \subfloat[]{
    \label{fig:subfig:estimmargseconddim}
    \includegraphics[width=.5\textwidth]{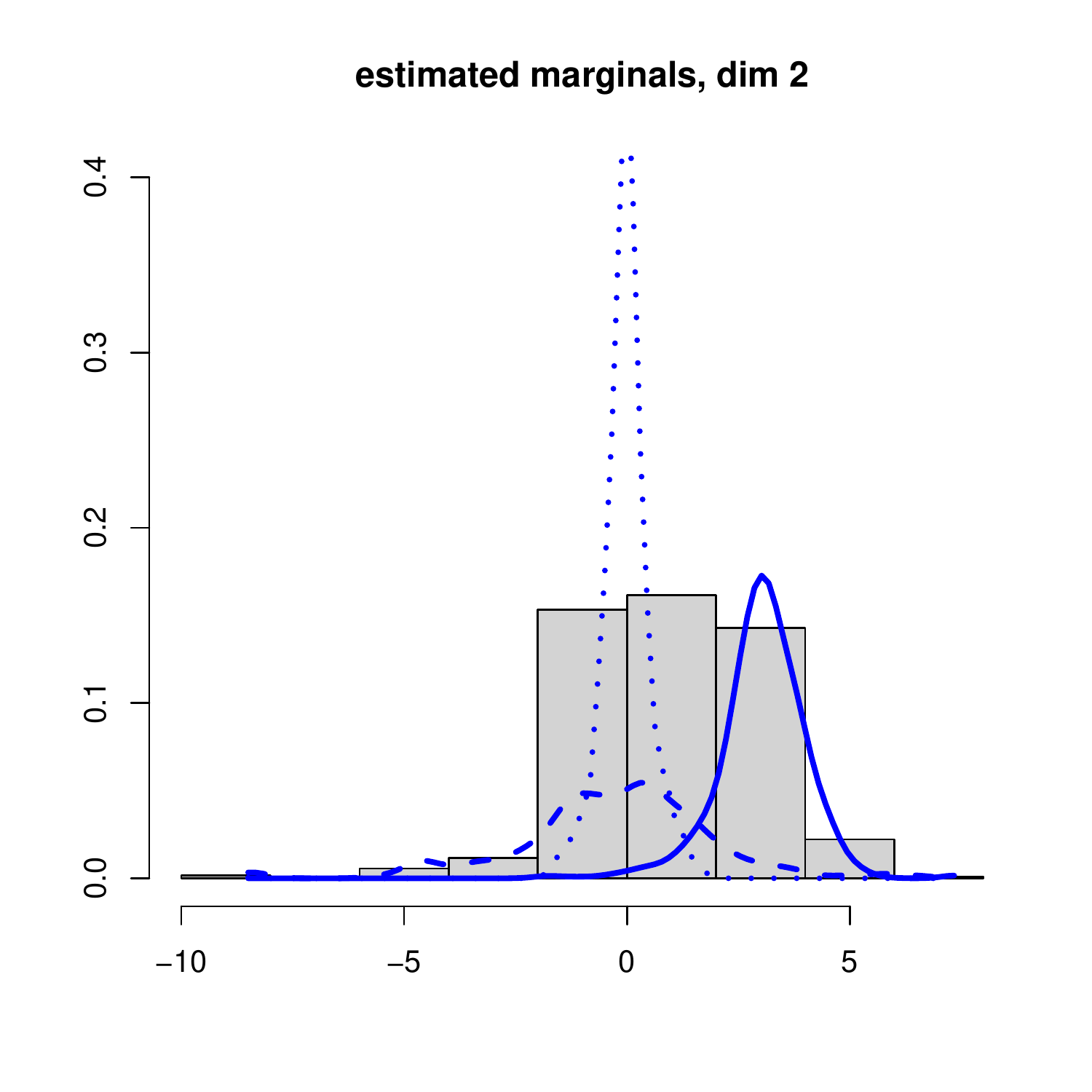} }
  \caption{True and estimated marginal densities of the three clusters
    and the two dimensions for $n=900$ (last replication). The
    top row contains the true marginals and the column on the
    left contains the first dimension. The marginal estimates are
    those found at the last step of the algorithm.}
  \label{fig:marginals-lstep-n900}
\end{figure}

%%%%% RESULTS FOR THE MARGINALS (FIRST STEP OF ALGORITHM)
% \begin{figure}[p]
%   \centering
%   \subfloat[]{
%     \label{fig:subfig:truemargfirstdim1}
%     \includegraphics[width=.5\textwidth]{true-marginals-n900-dim1.pdf} }
%   \subfloat[]{
%     \label{fig:subfig:truemargseconddim1}
%     \includegraphics[width=.5\textwidth]{true-marginals-n900-dim2.pdf} }\\
%   \subfloat[]{
%     \label{fig:subfig:estimmargfirstdim1}
%     \includegraphics[width=.5\textwidth]{estimated-marginals-Tfirst-n900-dim1.pdf} }
%   \subfloat[]{
%     \label{fig:subfig:estimmargseconddim1}
%     \includegraphics[width=.5\textwidth]{estimated-marginals-Tfirst-n900-dim2.pdf} }
%   \caption{True and initial marginal densities of the three clusters
%     and the two dimensions for $n=900$ and the last replication. The
%     top row contains the true marginals and the first column from the
%     left contains the first dimension. The initial marginal estimates are
%     those found at initialization.}
%   \label{fig:marginals-fstep-n900}
% \end{figure}

\section{Conclusion}

An algorithm was designed and implemented to estimate the parameters
of copula-based semiparametric mixture models. The model considered is a very general one since it does not assume any specific structure (such as the location-scale assumption) on marginal densities. The algorithm is deterministic, and hence always
returns the same result if fed with the same initial point. Good
performance was obtained in an illustrative numerical example, which
suggests that the algorithm may indeed be monotonic under
appropriate conditions.

However, its theoretical analysis proved to be challenging and only
partial results were obtained for versions of the algorithm where
either the copula parameter or the marginals were fixed. A future
avenue of research may consist of rejecting those updates where the
smoothed log-likelihood does not increase and investigate whether
convergence results of~\cite{meyer1976sufficient,zangwill1969book}
could be applied. To simplify, the full parametric case may first
be considered. To improve the numerical implementation of the
algorithm, the integral~\eqref{eq:smootheroperator} may be computed
with other methods, such as~\cite{qiang_high-order_2010}.
The sensitivity of the algorithm with respect to initialization may be
investigated further, however.

\clearpage

\bibliography{reference}
\bibliographystyle{apa}

\end{document}